\DeclareMathOperator*{\argmax}{argmax}
\newtheorem{example}{Example}
\newtheorem{theorem}{Theorem}
\newtheorem{proposition}{Proposition}
\newtheorem{remark}{Remark}
\newtheorem{definition}{Definition}
\title{Characterising Interventions in Causal Games}
\author[1]{\href{mailto:<manujmishra2000@gmail.com>?Subject=Your UAI 2024 paper}{Manuj Mishra}}
\author[1,2]{James Fox}
\author[1]{Michael Wooldridge}
\affil[1]{%
    Department of Computer Science\\
    University of Oxford\\
    UK
}
\affil[2]{%
    London Initiative for Safe AI (LISA)\\
    UK
}
\begin{document}
\maketitle

\begin{abstract}
  \emph{Causal games} are probabilistic graphical models that enable causal queries to be answered in multi-agent settings. They extend causal Bayesian networks by specifying decision and utility variables to represent the agents' degrees of freedom and objectives. In multi-agent settings, whether each agent decides on their policy before or after knowing the causal intervention is important as this affects whether they can respond to the intervention by adapting their policy. Consequently, previous work in causal games imposed chronological constraints on permissible interventions. We relax this by outlining a sound and complete set of primitive causal interventions so the effect of any arbitrarily complex interventional query can be studied in multi-agent settings. We also demonstrate applications to the design of safe AI systems by considering causal mechanism design and commitment.
\end{abstract}

\section{Introduction}\label{sec:intro}

When designing a system for rational, self-interested agents, it is important to incentivise behaviour that aligns with high-level goals, such as maximising social welfare or minimising the harm to other agents. To address this, game theory provides several representations that have different strengths and weaknesses depending on the setting. \citet{causal_games} recently introduced \textit{causal games} to extend \citet{pearl2009causality}'s `causal hierarchy' to the multi-agent setting. Causal games are graphical representations of dynamic non-cooperative games, which can be more compact and expressive than extensive-form games \citep{maids,hammond2021equilibrium}. Like causal Bayesian networks, they use a directed acyclic graph (DAG) to represent causal relationships between random variables, but they also specify decision and utility variables. Each agent selects a policy -- independent conditional probability distributions (CPDs) over actions for each of their decision variables -- to maximise their expected utility. 

Causal Bayesian networks handle interventions in settings without agents by cutting any edges incident to the intervened node in the DAG to represent that the effect of an intervention can only propagate downstream. However, to handle how an agent might or might not adapt their policy in response to an intervention, mechanised graphs extend the regular DAG by explicitly representing each variable's distribution and showing which other variables' distributions matter to an agent optimising a particular decision rule \citep{causal_games,dawid2002influence}.

\paragraph{Related Work:} 
The effect of causal interventions is important in many fields such as economics~\citep{heckman2022causality,leroy2004causality}, computer science~\citep{brand2023recent} and public health~\citep{ahern2009estimating, glass2013causal}. However, these fields use models that do not account for the strategic nature of multi-agent systems. Recently, causal games~\citep{causal_games} were introduced to unify the power of causal and strategic reasoning in one model. Causal games and their single-agent variant, causal influence diagrams~\citep{agent_incentives}, have been used to design safe and fair AI systems~\citep{ashurst2022fair,everitt2021reward,farquhar2022path,carroll2023characterizing}, explore reasoning patterns and deception~\citep{pfeffer2007reasoning,ward2022agent}, and identify agents from data~\citep{kenton2023discovering}. The key limitation is that existing work on multi-agent causal models assumes that an intervention is either fully post-policy (entirely invisible) to all agents or fully pre-policy (entirely visible) to all agents before they decide on their decision rule at each decision point. 

\paragraph{Contributions:}
Our most important novel contribution is to extend the theory of interventions in causal games to be able to accommodate arbitrary queries where agents choose their decision rules based on any subset of the interventions (those visible to them). This is necessary to discuss richer properties of causal games and calculate certain specifications. First, in Section \ref{sec:prim}, we present a sound and complete set of primitive causal interventions that enable any causal intervention (a game modification) to be decomposed into one of four operations acting on CPDs or functions acting on such distributions. Second, in Section \ref{sec:interventional_queries}, we prove that this generalises \citet{causal_games}'s notion of pre-policy and post-policy interventions, which assume that interventions are either visible to all agents (pre-policy) or no agents (post-policy), to arbitrarily complex compound interventions. In Section \ref{sec:CMD}, we explore how our theoretical contributions are useful for both qualitative and quantitative specifications in causal mechanism design. The former exploits graphical properties of the causal game's mechanised graph, and the latter formalises the effect of taxation and reward schemes. Finally, in Section \ref{sec:commit}, we show how causal games can be helpful for representing `commitment', where one agent can gain a strategic advantage over others by committing to a policy before the game begins.

\section{Background}

This section reviews \citet{causal_games}'s Causal Games. We begin with an example.

\begin{example}[\citet{spence1978job}'s Job Market Signalling Game]
    \label{ex:job_market}
    A worker who is either hard-working or lazy is hoping to be hired by a firm. They can choose to pursue university education but know that they will then suffer from three years of studying, especially if they are lazy. 
    The firm prefers hard workers but is using an automated hiring system that can only observe the worker’s education, not their temperament.
\end{example}

\begin{figure}[]
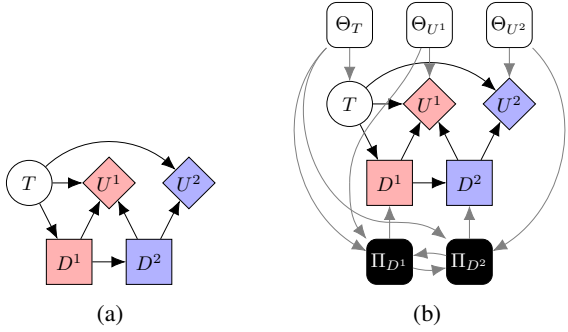

    \centering
    \begin{subfigure}[b]{0.45\linewidth}
        \centering
        \resizebox{0.75\width}{!}{
        \begin{influence-diagram}
            \node (T) [] {$T$};
            \node (U1) [utility, right = of T, player1] {$U^1$};
            \node (U2) [utility, right = of U1, player2] {$U^2$};
            \node (D1) [decision, below right = 1.4cm and 0.7cm of T, player1] {$D^1$};
            \node (D2) [decision, right = of D1, player2] {$D^2$};
            \edge {T} {D1};
            \edge {D1} {D2};
            \path (T) edge[->, bend left=45] (U2);
            \edge {T} {U1};
            \edge {D1} {U1};
            \edge {D2} {U1,U2};
        \end{influence-diagram}
        }
        \caption{}
        \label{fig:signal:a}
    \end{subfigure}
    \begin{subfigure}[b]{0.54\linewidth}
        \centering
        \resizebox{0.75\width}{!}{
        \begin{influence-diagram}
        \node (X) [] {$T$};
        \node (U1) [utility, right = of X, player1] {$U^1$};
        \node (U2) [utility, right = of U1, player2] {$U^2$};
        \node (D1) [decision, below right = 1.4cm and 0.7cm of X, player1] {$D^1$};
        \node (D2) [decision, right = of D1, player2] {$D^2$};
        \edge {X} {D1};
        \edge {D1} {D2};
        \path (X) edge[->, bend left=45] (U2);
        \edge {X} {U1};
        \edge {D1} {U1};
        \edge {D2} {U1,U2};
        \node (X_mec) [relevancew, above = of X] {$\Theta_T$};
        \node (U2_mec) [relevancew, above = of U2] {$\Theta_{U^2}$};
        \node (D1_mec) [relevanceb, below = of D1] {$\Pi_{D^1}$};
        \node (U1_mec) [relevancew, above = of U1] {$\Theta_{U^1}$};
        \node (D2_mec) [relevanceb, below = of D2] {$\Pi_{D^2}$};
        \edge[gray] {X_mec} {X};
        \edge[gray] {D1_mec} {D1};
        \edge[gray] {D2_mec} {D2};
        \edge[gray] {U1_mec} {U1};
        \edge[gray] {U2_mec} {U2};
        \node (space1) [minimum size=0mm, node distance=2mm, below = 0.7cm of D1, draw=none] {};
        \draw (X_mec) edge[gray,in=180,out=-135] (space1.center)
        (space1.center) edge[gray,->,out=0,in=135] (D2_mec);
        \node (space3) [minimum size=0mm, node distance=2mm, left = 0.7cm of D1, draw=none] {};
        \draw (U1_mec) edge[gray,in=90,out=-110] (space3.center)
        (space3.center) edge[gray,->,out=-90,in=135] (D1_mec);
        \path (D1_mec) edge[gray,->, bend right=15] (D2_mec);
        \path (D2_mec) edge[gray,->, bend right=15] (D1_mec);
        \draw (X_mec) edge[gray,->,out=-135,in=155] (D1_mec);
        \draw (U2_mec) edge[gray,->, out=-45, in=25] (D2_mec);
    \end{influence-diagram}
        }
        \caption{}
        \label{fig:signal:b}
    \end{subfigure}
 \caption{A causal game's (a) graph and (b) mechanised graph for Example \ref{ex:job_market}.}
    \label{fig:signal}
\end{figure}

We use capital letters $V$ for random variables, lowercase letters $v$ for their instantiations, and bold letters $\bm{V}$ and $\bm{v}$ respectively for sets of variables and their instantiations.
We let $\dom(V)$ denote the finite domain of $V$ and let $\dom(\bm{V}) \coloneqq {\bigtimes}_{V \in \bm{V}}\dom(V)$. $\Pa_V$ denotes the parents of variable $V$ in a graphical representation and $\pa_V$ the instantiation of $\Pa_V$. We also define $\Ch_V$, $\Anc_V$, $\Desc_V$, and $\Fa_V \coloneqq \Pa_V \cup \{V\}$ as the children, ancestors, descendants, and family of $V$, respectively. As with $\pa_V$, their instantiations are written in lowercase. We use superscripts to indicate an agent \label{def:agent} $i \in N = \{1, \dots, n\}$ and subscripts to index the elements of a set; for example, the decision variables belonging to agent $i$ are denoted $\bm{D}^i = \{D^i_1,\ldots,D^i_m\}$.

\subsection{Causal Games}
Causal games (CGs) are causal multi-agent influence diagrams \citep{maids,causal_games}.
Influence diagrams were initially devised to model single-agent decision problems graphically \citep{influence_diagrams, miller1976development}. They are defined similarly to 
a Bayesian network (BN) but with additional utility variables and parameter-less decision variables. A \textit{causal Bayesian network} (CBN) is a BN with edges that faithfully represent causal relationships \citep{pearl2009causality}. So, a CG is a game-theoretic CBN, where agents select a conditional distribution over actions at their decision variables to maximise the expected cumulative value of their utility variables. The simplest causal intervention $\Do(\bm{Y}=\bm{y})$ in a CBN or CG fixes the values of variables $\bm{Y}$ to some $\bm{y}$; we denote the resulting joint distribution by $\Pr_{\bm{y}}(\bm{V})$.

\begin{definition}
    \label{def: causal_game} 
    A \textbf{causal game (CG)} is a structure $\model = (\graph, \params)$ where $\graph = (\agents, \vars, \edges)$ specifies a set of agents $\agents = \set{1, \ldots, n}$ and a directed acyclic graph (DAG) ($\vars, \edges$) where $\bm{V}$ is partitioned into chance variables $\bm{X}$, decision variables $\bm{D} = \bigcup_{i \in N} \bm{D}^i$, and utility variables $\bm{U} = \bigcup_{i \in N} \bm{U}^i$. The parameters $\bm{\theta} = \{\theta_{V}\}_{V \in \bm{V} \setminus \bm{D}}$ define the CPDs $\Pr(V \mid \Pa_V ; \theta_V)$ for each non-decision variable such that for \emph{any} parameterisation of the decision variable CPDs, the induced model with joint distribution $\Pr^{\policyprof}(\bm{v})$ is a causal Bayesian network, i.e., $\graph$ is Markov compatible with $\Pr^{\policyprof}_{\bm{y}}$ for every $\bm{Y} \subseteq \bm{V}$ and $\bm{y} \in \dom(\bm{Y})$, and that:
        \noindent
        \begingroup
        \small   
  \thinmuskip=\muexpr\thinmuskip*5/8\relax
  \medmuskip=\muexpr\medmuskip*5/8\relax  
        $$\jointdistr^{\policyprof}_{\bm{y}}(v \mid \pa_V) =
        \begin{cases}
            1 & \text{\hspace{-.8em} $V \in \bm{Y}$, $v$ is consistent with $\bm{y}$,}\\
            \Pr^{\policyprof}(v \mid \pa_V) & \text{\hspace{-.8em} $V \notin \bm{Y}$, $\pa_V$ is consistent with $\bm{y}$.}\\
        \end{cases}$$
        \endgroup
\end{definition}

Figure \ref{fig:signal:a} depicts a causal game for Example \ref{ex:job_market}. White circles represent chance variables, e.g., the worker's temperament ($T$) with probabilities $p$ for hard-working and $1-p$ for lazy. Decision and utility variables are squares and diamonds, respectively. The worker's decision ($D^1$: attend university or not) and utility ($U^1$) are shown in red, while the firm's decision ($D^2$: offer a job or not) and utility ($U^2$) are in blue. Missing edges, like $T \rightarrow D^2$, indicate an agent's lack of information. The worker receives utility $5$ for a job offer but incurs a $1$ or $2$ cost for attending university (depending on their temperament). The firm gains $3$ for hiring a hard worker but suffers a cost of $2$ if they hire a lazy worker or an opportunity cost of $1$ if they reject a hard worker. Parameters $\bm{\theta}$ define conditional distributions for $T$, $U^1$, and $U^2$.

Given a causal game $\model = (\graph, \bm{\theta})$, a \textbf{decision rule} $\pi_D$ for $D \in \bm{D}$ is a CPD $\pi_D(D \mid \Pa_D)$ and a \textbf{partial policy profile} $\pi_{\bm{D}'}$ is a set of decision rules $\pi_D$ for each $D \in \bm{D}' \subseteq \bm{D}$, where we write $\pi_{-\bm{D}'}$ for the set of decision rules for each $D \in \bm{D} \setminus \bm{D}'$. 
A \textbf{policy} ${\bm{\pi}}^i$ refers to ${\bm{\pi}}_{\bm{D^i}}$, and a (full) \textbf{policy profile} ${\bm{\pi}} = ({\bm{\pi}}^1,\ldots,{\bm{\pi}}^n)$ is a tuple of policies, where ${\bm{\pi}}^{-i} \coloneqq ({\bm{\pi}}^1, \dots, {\bm{\pi}}^{i-1}, {\bm{\pi}}^{i+1}, \dots, {\bm{\pi}}^n)$. 
A decision rule is \textbf{pure} if $\pi_D(d \mid \pa_D) \in \{0,1\}$ and \textbf{fully stochastic} if $\pi_D(d \mid \pa_D) > 0$ for all $d \in \dom(D)$ and each  \textbf{decision context} $\pa_D \in \dom(\Pa_D)$; this holds for a policy (profile) if it holds for all decision rules in the policy (profile).

By combining ${\bm{\pi}}$ with the partial distribution $\Pr$ over the chance and utility variables, we obtain a joint distribution $\Pr^{\bm{\pi}}(\bm{x},\bm{d},\bm{u}) \coloneqq \prod_{V \in \bm{V} \setminus \bm{D}} \Pr (v \mid \pa_{V}) \cdot \prod_{D \in \bm{D}} \pi_{D} (d \mid \pa_{D})$\label{def:jointMAID} over all the variables in $\model$; inducing a BN. The \textbf{expected utility} for Agent $i$ given a policy profile ${\bm{\pi}}$ is defined as the expected sum of their utility variables in this BN, that is $\exputility_{\policyprof}[\utilvars^i] = \sum_{U \in \sU^i} \sum_{u \in \dom (U)} \jointdistr^{\policyprof} (U = u) \cdot u$\label{def:EU}. A policy ${\bm{\pi}}^i$ is a \textbf{best response} to profile ${\bm{\pi}}^{-i}$ if $\exputility_{({\bm{\pi}}^i,{\bm{\pi}}^{-i})}[\utilvars^i] \geq \exputility_{({\bm{\tilde{\pi}}}^i,{\bm{\pi}}^{-i})}[\utilvars^i]$ for all ${\bm{\tilde{\pi}}}^i \in {\bm{\Pi}}^i$. A \textbf{Nash equilibrium} (NE) is a policy profile where each agent plays a best response. A causal game is solved by finding a policy profile that satisfies a solution concept, usually an NE. 

Causal games offer several explainability and complexity advantages over extensive form games \cite{maids}. One key advantage is that probabilistic dependencies between chance and strategic variables can be exploited using the \textit{d-separation} graphical criterion \citep{pearl1988probabilistic}.

\begin{definition}
    \label{def:dsep}
    A \textbf{path}, $p$, in a DAG\footnote{We use that d-separation remains a valid test for conditional independence in cyclic graphs \citet{cyclicdsep}.} $\graph = (\bm{V}, \mathscr{E})$ is a sequence of adjacent variables in $\bm{V}$. A path $p$ is said to be \textbf{d-separated} by a set of variables $\bm{Y}$ if and only if:
    \begin{itemize}
        \item $p$ contains a \emph{chain} $X \rightarrow W \rightarrow Z$ or $X \gets W \gets Z$, or a \emph{fork} $X \leftarrow W \rightarrow Z$, and $W \in \bm{Y}$.
        \item $p$ contains a \emph{collider} $X \rightarrow W \leftarrow Z$ and $(\{W\} \cup \Desc_W) \cap \bm{Y} = \varnothing$.
    \end{itemize}
    A set $\bm{Y}$ \textbf{d-separates} $\bm{X}$ from $\bm{Z}$ ($\bm{X} \perp_\graph \bm{Z} \mid \bm{Y}$), if $\bm{Y}$ d-separates every path in $\graph$ from a variable in $\bm{X}$ to a variable in $\bm{Z}$. Sets of variables that are not d-separated are said to be \textbf{d-connected}, denoted $\bm{X} \not\perp_\graph \bm{Z} \mid \bm{Y}$.
    \end{definition}

If $\bm{X} \perp_\graph \bm{Z} \mid \bm{Y}$ in $\graph$, then $\bm{X}$ and $\bm{Z}$ are probabilistically independent conditional on $\bm{Y}$ in the sense that $\Pr(\bm{x} \mid \bm{y}, \bm{z}) = \Pr(\bm{x} \mid \bm{y})$, in every distribution $\Pr$ that is Markov compatible with $\graph$ and for which $\Pr(\bm{y}, \bm{z})>0$. Conversely, if $\bm{X} \not\perp_\graph \bm{Z} \mid \bm{Y}$, then $\bm{X}$ and $\bm{Z}$ are dependent conditional on $\bm{Y}$ in at least one distribution Markov compatible with $\graph$. For example, there are several paths from $U^2$ to $U^1$ in Figure \ref{fig:signal:a}: direct forks through $T$ or $D^2$, a fork through $T$ and then a forward chain through $D^1$, or a backward chain through $D^2$ and then a fork through $D^1$. If $\bm{Y} = \varnothing$, then $U^2$ is d-connected to $U^1$ ($U^2 \not\perp_\graph U^1\mid\varnothing$), but if $\bm{Y} = \{T, D^2\}$ then all of the paths have been blocked by conditioning on $\bm{Y}$ and so $U^2 \perp_\graph U^1\mid\bm{Y}$.

A causal game's regular graph $\graph$ captures the dependencies between \textbf{object-level} variables in the environment, but its \textbf{mechanised graph} $\mec{\graph}$ is an enhanced representation revealing the strategically relevant dependencies between agents' decision rules and the parameterisation of the game \citep{causal_games}. Collectively, decision rules and CPDs are known as the mechanisms $\mecvars$ of the decision, and chance/utility variables, respectively. Each object-level variable $V \in \bm{V}$ has a mechanism parent $\mecvar_V$ representing the distribution governing $V$. More specifically, each decision $D$ has a new decision rule parent $\Pi_D = \mecvar_D$ and each non-decision $V$ has a new parameter parent $\Theta_V = \mecvar_V$, whose values parameterise the CPDs. The \textit{independent mechanised graph} is the result (it has no inter-mechanism edges).

However, agents select a decision rule $\pi_D$ (i.e., the value of a decision rule variable $\Pi_D$) based on both the parameterisation of the game (i.e., the values of the parameter variables) and the selection of the other decision rules ${\bm{\pi}}_{-D}$ -- so these dependencies are captured by the edges from other mechanisms into decision rule nodes. These reflect some \emph{rationality assumptions}, captured by a set of \emph{rationality relations} $\mathcal{R}$ = $\{r_D\}_{D\in\bm{D}}$ that represent how the agents choose their decision rules. Each decision rule $\Pi_D$ is governed by a \emph{serial relation} $r_D \subseteq \dom(\Pa_{\Pi_D}) \times \dom(\Pi_D)$, which accounts for the fact that an agent may not deterministically choose a single decision rule $\pi_D$ in response to some $\pa_{\Pi_D}$. If all of the rationality relations $\mathcal{R}$ are satisfied by $\pi$, then $\pi$ is an $\mathcal{R}$-rational outcome of the game. We often assume that the agents are playing best responses $\mathcal{R} = \rbr$, so the $\rbr$-rational outcomes are simply the NE of the game. 

Finally, a graphical criterion $\mathcal{R}$-reachability (based on d-separation) determines which of these edges are necessary in the mechanised graph, e.g., $\mecvar_V \rightarrow \Pi_D$ exists if and only if the choice of best response decision rule $\Pi_D$ depends on the CPD at $\mecvar_V$ ($\mecvar_V$ is $\rbr$-relevant to $\Pi_D$). The mechanised graph for Example \ref{ex:job_market} (in Figure \ref{fig:signal:b}) shows that $\Theta_T$, $\Theta_{U^1}$, and $\Pi_{D^2}$ are all $\rbr$-relevant to $\Pi_{D^1}$ whereas $\Theta_{T}$, $\Theta_{U^2}$, and $\Pi_{D^1}$ are $\rbr$-relevant to $\Pi_{D^2}$. In contrast to a causal game's regular DAG, there may exist cycles between mechanisms (see \citep{causal_games} for more details). 

So, the mechanised graph $\mec{\graph}$ takes the original graph $\graph$ and, for each variable $V \in \vars$, adds mechanism parent node $\mecvar_V$ and edge $\mecvar_V \rightarrow V$ as well as edges $\mecvar_V \rightarrow \Pi_D$ for each decision rule $\Pi_D$ where $\mecvar_V$ is $\rbr$-relevant to $\Pi_D$.

\section{Characterising Interventions} \label{sec:interventions}

Causal games admit queries on level two of \citet{pearl2009causality}'s \emph{Causal Hierarchy}. Importantly, in game-theoretic settings, we only assume that an $\R$-rational outcome of the game (e.g., an NE) is chosen rather than some unique policy profile $\pi$. We therefore evaluate queries with respect to a set of policy profiles, e.g., `if $D^1=g$, is it the case that for all NE\ldots'. When an intervention takes place is important. 
\citet{causal_games} previously introduced a distinction between \textit{pre-policy} queries, where the intervention occurs before the policy profile is selected, and \textit{post-policy} queries, where the intervention occurs after. We extend this to accommodate arbitrary queries where each agent makes decisions based on the subset of interventions visible to them. 

\subsection{Primitive Interventions}\label{sec:prim}

Given a causal game $\model$ with mechanised graph $\mec \graph$ and rationality relations $\R$, an intervention $\I$ is a function that maps a set of joint probability distributions $\set{\jointdistr^{\policyprof}(\sv)}_{\policyprof \in \R}$ to a new set $\set{\jointdistr^{\policyprof}(\sv_\I)}_{\policyprof \in \R^*}$ where $\R^*$ are the rationality relations of the intervened game with graph $\mec \graph_\I$ and $\jointdistr^{\policyprof}(\sv_\I)$ is the joint probability distribution represented by the CBN induced by $\mec \graph_\I$ when parameterised over policy profile $\policyprof$. We define four primitive types of intervention.

\textbf{(1) Fixing an object-level variable:}
Intervening on variable $X$ replaces $\jointdistr^{\policyprof}(x \mid \pa_X)$ with a new CPD $\jointdistr^{\I}(x \mid \pa_X^*)$. Graphically, when $\pa_X^* \neq \pa_X$, the incoming edges to $X$ are changed such that $V \rightarrow X$ exists if and only if $V \in \pa_X^*$. The induced distribution is:
\begin{equation*}
\jointdistr^{\policyprof}(\sv_\I) = \jointdistr^{\I}(x \mid \pa_X^*) \cdot \prod_{V \in \vars \setminus \set{X}} \jointdistr^{\policyprof}(v \mid \pa_V)
\end{equation*}

A \textit{hard object-level intervention} assigns $\jointdistr^\I = \delta(X, g)$. In \citet{pearl2009causality}'s do-calculus, this is written $\Do (X=g)$. Any other form of object-level intervention is qualified as \textit{soft}. 

\textbf{(2) Fixing a mechanism variable:}
A \textit{hard mechanism-level intervention} $\Do (\mecvar_V = \mecval_V)$ sets the distribution over each mechanism $\mecvar_V$ to $\delta(\mecvar_V, \mecval_V)$. Any other form of mechanism-level intervention is qualified as \textit{soft}.
A mechanism-level intervention on decision rule $\Pi_D$ replaces $r_D: \dom (\Pa_{\Pi_D}) \rightarrow \dom (\Pi_D)$ with a new rationality relation $r_D^{\I}: \dom(\Pa_{\Pi_D}^*) \rightarrow \dom(\Pi_D)$. Graphically, when $\Pa_{\Pi_D}^* \neq \Pa_{\Pi_D}$, the incoming edges to variable $\Pi_D$ are changed such that $V \rightarrow \Pi_D$ exists if and only if $V \in \Pa_{\Pi_D}^*$. For a parameter variable $\Theta_V$ of $V \in \vars \setminus \decvars$, an intervention assigns a new distribution from the set of all CPDs over set $V$ given the values of its parents, set $Pa_V$. Note that parameter variables don't have parent mechanism variables as inputs to the choice of distribution.

\textbf{(3) Adding a new object-level variable:}
Adding a new object-level variable $Y$ introduces a new CPD $\jointdistr^{\I}(y \mid \pa_Y)$ to the joint distribution factorisation. Graphically, this adds a new node $Y$ to $\graph$ and adds edges $X \rightarrow Y$ for all $X \in \Pa_Y$ and $Y \rightarrow Z$ for all $Z \in \Ch_Y$. The induced distribution is
\begin{equation*}
\jointdistr^{\policyprof}((\sv \cup Y)_\I) = \jointdistr^{\I}(y \mid \pa_Y) \cdot \prod_{V \in \vars} \jointdistr^{\policyprof}(v \mid \pa_V^*)
\end{equation*}
\begin{equation*}
\text{where, for  } V \in \vars \text{, } \Pa_V^* = 
\begin{cases}
    \Pa_V \cup {Y} & \text{ if } V \in \Ch_Y\\
    \Pa_V & \text{ otherwise}
 \end{cases}
\end{equation*}
\textbf{(4) Removing an existing object-level variable:}
Removing an existing object-level variable $Y$ removes the CPD $\jointdistr^{\policyprof}(y \mid \pa_Y)$ from the joint distribution factorisation. Graphically, this removes the node $Y$ from $\graph$ and removes edges $X \rightarrow Y$ for all $X \in \Pa_Y$ and $Y \rightarrow Z$ for all $Z \in \Ch_Y$. The induced distribution is
\begin{equation*}
\jointdistr^{\policyprof}((\sv \setminus \set{Y})_\I) = \prod_{V \in \vars \setminus \set{Y}} \jointdistr^{\policyprof}(v \mid \pa_V^*)
\end{equation*}
\begin{equation*}
\text{where, for  } V \in \vars \text{, } \Pa_V^* = 
\begin{cases}
    \Pa_V \setminus \set{Y} & \text{ if } V \in \Ch_Y\\
    \Pa_V & \text{ otherwise}
\end{cases}
\end{equation*}

\begin{remark}
After any intervention of type 1, 3, or 4, $\mec \graph$ must be updated to reflect any changes in $\mathcal{R}$-reachability between mechanisms. Note that a type 1 intervention can be considered a type 4 intervention followed by a type 3 intervention, but we include it as a primitive for convenience.
\end{remark}

\begin{theorem}
\label{theorem:prim}
Primitive interventions are a sound and complete formulation of causal interventions. 
\end{theorem}
\begin{proof}[Proof sketch]
    Soundness comes because each primitive intervention corresponds with a function between a set of probability distributions induced by $\mathcal{R}$-rational outcomes to a new set of probability distributions induced by (a possibly different) set of $\mathcal{R}$-rational outcomes. This makes it a valid causal intervention. Completeness is shown by proving any valid intervention can be decomposed into an equivalent set of primitive interventions. We relegate the full proof to Appendix~\ref{app:proof}.
\end{proof}

There are a number of other interesting intervention types that can be constructed by composing these primitives. 

\textbf{Unfixing an object-level variable:}
For every type 1 intervention $\I$ which fixes variable $X$, there is a type 1 inverse intervention $\I'$ which unfixes it. It restores the intervened CPD to be based on the original policy profile $\policyprof$ and parents $\Pa_X$, rather than $\I$ and $\Pa_X^*$.

\textbf{Unfixing a mechanism variable:}
Similarly, for every type 2 intervention $\I$ which fixes a variable $\Pi_D$, there exists a type 2 inverse intervention $\I'$ which unfixes it. This restores the rationality relation associated with $\Pi_D$ to its default $r_D$, rather than $r_D^{\I}$. It also makes the mechanism conditionally dependent on the original parents $\Pa_{\Pi_D}$ rather than $\Pa_{\Pi_D}^*$.

\textbf{Adding an object-level dependency:}
Adding a dependency, e.g., $\Add (X \rightarrow Y)$, is equivalent to a type 1 intervention where $\jointdistr^{\I}(y \mid \pa_Y) = \jointdistr^{\policyprof}(y \mid \pa_Y \cup \set{X})$.

\textbf{Removing an object-level dependency:}
Removing a dependency, e.g., $\Del (X \rightarrow Y)$, is equivalent to a type 1 intervention where $\jointdistr^{\I}(y \mid \pa_Y) = \jointdistr^{\policyprof}(y \mid \pa_Y \setminus \set{X})$.

\subsection{Interventional Queries} \label{sec:interventional_queries}
An interventional query concerns the outcome of a game after a set of causal interventions $\sI$, where each agent is privy to the state of the game after a subset of these interventions has been performed. We say that an intervention is \textit{visible} to an agent if the agent has an opportunity to adapt their policy to that intervention. Consider Example~\ref{ex:job_market}. Unbeknownst to the firm, the worker may have an alternative job offer which changes her best-response policy. Simultaneously, the firm may have new hiring quotas, which change their payoffs and, therefore, their best response, but which are not disclosed to the worker. These two external interventions can be expressed in a unified analysis using our framework.

First, we introduce some new notation. $\prims$ denotes a set of primitive interventions. $\sI^i \subseteq \sI$ denotes the set of interventions visible to agent $i$. $\sI(\model)$ denotes the state of the causal game after applying interventions $\sI$ in any order. The $\circ$ operator denotes ordered composition where $(\sI_1 \circ \sI_0)(\model)$ is the state of the game after applying $\sI_0$ then $\sI_1$. As shorthand, $\I_0 \circ \I_1$ means $\set{\I_0} \circ \set{\I_1}$.

\begin{remark}
The order in which interventions are applied is important because interventions are not commutative. Consider, for example, two hard object-level interventions on the same variable but to different CPDs, $\delta(X, a)$ and $\delta(X, b)$. Then clearly $(\Do(X=a) \circ \Do(X=b))(\model) \not \equiv (\Do(X=b) \circ \Do(X=a))(\model)$.
\end{remark}
The $\mathcal{R}$-rational outcomes of the game after each agent $i$ has an opportunity to adapt to her visible interventions $\I^i$, is denoted $\R(\model_I)$. $\Theta_I$ denotes the parameterisation of non-decision mechanisms after interventions $\I$. Using this, we define an interventional query which Theorem \ref{thm:decomp} proves can always be decomposed into primitive intervention sets.

\begin{definition}[Interventional Query]
    Given CG $\model$, rationality relations $\R$, and set of visible interventions for each agent $\sI^1 , \ldots, \sI^N$, an interventional query $\phi(\policyprof)$ is a first-order logical formula that acts on the joint probability distribution $\jointdistr^{\policyprof}$ induced by $\R$-rational outcome $\policyprof \in \R(\model_{\sI})$ and parameterisation $\Theta_{\sI}$ where $\sI = \sI^1 \cup \ldots \cup \sI^N$.
\end{definition}

\begin{theorem}[Decomposition of Intervention Sets] \label{thm:decomp}
For any set of interventions $\sI$, where $\sI^i \subseteq \sI$ is the subset of interventions visible to agent $i$, there are primitive intervention sets $\prims_0, \ldots, \prims_{m}$, such that
\begin{equation} 
     \forall i~\exists j \in \set{0,\ldots,m} : \sI^i(\model)= (\prims_j \circ \prims_{j-1} \ldots \circ \prims_{0})(\model)
\end{equation}
\end{theorem}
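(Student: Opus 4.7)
My plan is to build the required sequence inductively by exploiting Theorem~\ref{theorem:prim} (every intervention in $\sI$ decomposes into primitives) together with the availability of the inverse primitives described in Section~\ref{sec:prim} (unfixing for types~1 and~2, and the pairing of adding/removing for types~3 and~4) to undo earlier operations whenever an agent's view omits them.

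First I would apply Theorem~\ref{theorem:prim} to each $\I \in \sI$ to obtain, for every agent $i$, a primitive decomposition $\prims(\sI^i)$ of the visible interventions, so that $\sI^i(\model) = \prims(\sI^i)(\model)$ interpreted as a canonical sequential application of those primitives. I would then enumerate the distinct views $V_0, V_1, \dots, V_m$ arising from the family $\{\sI^i : i \in N\}$, placing $V_0 = \varnothing$ first whenever some agent has no visible intervention, so that $\prims_0 = \varnothing$ already serves that agent with $j = 0$.

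The inductive step is the heart of the argument. Suppose that after $\prims_0, \dots, \prims_{k-1}$ the game sits in state $V_{k-1}(\model)$. I would define $\prims_k$ as the union of, firstly, an inverse primitive for every $\iota \in \prims(V_{k-1}) \setminus \prims(V_k)$ (unfixing a previous type~1 or type~2, removing to reverse a previous type~3, adding to reverse a previous type~4) and, secondly, the primitive $\iota$ itself for every $\iota \in \prims(V_k) \setminus \prims(V_{k-1})$. Applying $\prims_k$ to $V_{k-1}(\model)$ then yields $V_k(\model)$, so by induction $(\prims_k \circ \cdots \circ \prims_0)(\model) = V_k(\model)$; for any agent $i$ with $\sI^i = V_k$, set $j = k$.

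The main obstacle will be the non-commutativity within each $\prims_k$ noted in the remark preceding the definition of interventional queries: several of its primitives may refer to variables that others in the same step add or remove, or they may alter parent sets on which a later fix relies. I would resolve this by imposing a fixed internal ordering of $\prims_k$ — performing type~3 additions first, then type~1 and type~2 modifications (including the inverse and dependency-changing primitives), and finally type~4 removals, with an inner topological sort whenever one added variable is the parent of another — so that every primitive is applied to a well-defined intermediate state. A secondary subtlety is that the mechanised graph and rationality relations $\R$ can shift between steps, but the remark following Theorem~\ref{theorem:prim} already stipulates that $\mec{\graph}$ and $\R$ are refreshed as part of applying each primitive, so this bookkeeping is absorbed into the construction and does not affect the induction.
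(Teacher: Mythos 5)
Your overall strategy --- decompose each agent's visible set into primitives via Theorem~\ref{theorem:prim}, chain the distinct views together using inverse primitives, and induct along the sequence of views --- is the same skeleton as the paper's proof. The divergence is in the inductive step, and that is where there is a genuine gap. You propose to transform $V_{k-1}(\model)$ into $V_k(\model)$ by inverting only the primitives in $\prims(V_{k-1})\setminus\prims(V_k)$ and applying only those in $\prims(V_k)\setminus\prims(V_{k-1})$. This ``delta'' step is not justified and fails in general, because the inverse primitives of Section~\ref{sec:prim} restore a mechanism to its \emph{original} state (the original CPD, the original parents $\Pa_X$, the original rationality relation $r_D$), not to the intermediate state produced by the primitives you intend to retain. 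Concretely, take $\sI^1=\set{\Del(W\rightarrow X), \Del(Z\rightarrow X)}$ and $\sI^2=\set{\Del(W\rightarrow X)}$. Your step inverts only $\Del(Z\rightarrow X)$, which by definition restores $X$'s CPD to depend on the full original parent set $\Pa_X\ni W$, clobbering the retained intervention; the result is $\model$, not $\sI^2(\model)$. The same problem arises whenever a dropped and a retained primitive act on the same variable or mechanism, and your proposed internal ordering of $\prims_k$ (type~3 first, then types~1 and~2, then type~4) does not repair it --- the difficulty is not the order of application but the target state to which each inverse restores.

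The paper avoids this entirely by being less economical: its step $\prims_j$ undoes \emph{all} of the previous view's primitives, applying the inverses in reverse (last-in, first-out) order so that $\invprims^{j-1}\circ\prims^{j-1}$ is the identity on $\model$, and then applies \emph{all} of $\prims^j$ from scratch. The inductive identity $\sI^i(\model)=(\prims^i\circ\invprims^{i-1}\circ\prims^{i-1})(\model)=(\prims_j\circ\sI^{i-1})(\model)$ then holds with no commutativity or independence assumption on the primitives. Your argument is repaired by replacing the symmetric-difference step with this full undo-and-redo step; as written, the central claim $(\prims_k\circ\cdots\circ\prims_0)(\model)=V_k(\model)$ is not established.
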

That is to say, for any set of interventions $\sI$, where the visible set of each agent is an \textit{arbitrary} subset, $\sI^i \subseteq \sI$, we can construct an \textit{ordered} list of primitive interventions such that, after the first $j$ sets of primitive interventions, the state of the game is the exact state visible to Agent $i$ when choosing her policy. We prove Theorem \ref{thm:decomp} in Appendix~\ref{app:proof}. 

Taking this decomposition, we uniquely partition the agents into sets $A_0, \ldots, A_m$ according to the state of the game visible to them. The $\texttt{decompose}$ function maps $\I$ to sets $\prims_0, \ldots, \prims_m$ satisfying Theorem \ref{thm:decomp} and the corresponding partition of the agents $A_0, \ldots, A_m$. Then, Algorithm~\ref{alg:inter_query} solves the interventional query by iteratively calculating the $\R$-rational outcomes (e.g., NEs if $\R=\rbr$), fixing the policies of agents who cannot observe future interventions, and applying interventions. This subsumes \cite{causal_games}'s pre-policy and post-policy interventional queries. The computational complexity of Algorithm~\ref{alg:inter_query} is (in general) intractable, but as is almost any inference problem in Bayesian networks \cite{kwisthout2009computational}. \citet{fox2023imperfect} discuss how algorithms such as this one will only be practical in settings with bounded tree-width graphs, number of agents, and action sets. We leave improving the efficiency of this algorithm to future work.

Whenever the rationality assumptions have a solution existence guarantee (e.g., if $\R=\rbr$, there is always at least one NE of the game), then Algorithm~\ref{alg:inter_query} successfully terminates. There are two special cases:
\begin{enumerate}
    \item If $\prims_0 \cup \ldots \cup \prims_j = \emptyset$, the agent is not privy to any interventions and the interventional query is fully \textit{post-policy} with respect to Agent $i$.
    \item If $\prims_0 \cup \ldots \cup \prims_j = \I$, the agent is privy to all interventions and the interventional query is fully \textit{pre-policy} with respect to Agent $i$.
\end{enumerate}

\begin{algorithm}[h!]
\caption{Calculate the result of an interventional query}
\label{alg:inter_query}
\begin{algorithmic}[1]
\State \textbf{Input:} A causal game $\model$ with rationality relation $\R$, interventions $\sI = \sI^1 \cup \ldots \cup \sI^N$, and query $\phi(\policyprof)$.
\State $(\prims_0, \ldots, \prims_m), (A_0, \ldots, A_m) \leftarrow$ \Call{decompose}{$\sI$}
\State $A' \leftarrow \emptyset$
\For{$j = 0, \dots, m$}
    \State $A \leftarrow (A_j \cup \ldots \cup A_N) \setminus A'$
    \State $\nashpolicyprof \leftarrow$ uniformly sample an $\R$-rational outcome
    \For{{$i \in A_j$}}
        \State $\Do(\Pi_{D^i} = \hat{\pi}_{D^i})$
    \EndFor
    \For{{$\prim \in \prims_j$}}
        \State $\prim(\model)$
        \If{$\prim$ acts on $V \in \decvars^i \cup \policyprofs^i$}
            \State $A' \leftarrow A' \cup \set{i}$
        \EndIf
    \EndFor
\EndFor
\State $\jointdistr^{\policyprof} (\sv) \leftarrow \prod_{V \in \vars} \jointdistr^{\policyprof} (v\mid\pa_{V})$
\State \Return $\phi(\policyprof)$
\end{algorithmic}
\end{algorithm}

\textbf{Mechanism-level side effects:} Object-level interventions can have unintuitive mechanism-level side effects. A side effect is a modification to the inter-mechanism edges in $\mec \graph$ and $\not \rightarrow$ denotes an edge removal. Proposition~\ref{prop:side_effects} formalises the side effects of an object-level intervention.

\begin{definition}[Reachability Path]
  Let $D \in \decvars^i$. We write $\R(\mecvar_V \rightarrow \Pi_D)$ to denote the set of paths that make $\mecvar_V$ $\R$-relevant to $\Pi_D$. A \textit{reachability path} is any path $p \in \R(\mecvar_V \rightarrow \Pi_D)$. That is, a non-repeating sequence of nodes $V_0, ..., V_j \in \mec{}_{\perp}\vars$ of the independent mechanised graph $\mec{}_{\perp}\graph$ s.t $V_0 = \mecvar_V$, and $V_0$ is $\R$-relevant to $\Pi_D$.
\end{definition}

\begin{proposition}[Object-level intervention side effects] \label{prop:side_effects}
An object-level intervention $\jointdistr^{\I}(x \mid \pa_X^*)$ has side effect $\mecvar_V \not \rightarrow \Pi_D$ if, $\forall$ reachability paths $p \in \R(\mecvar_V \rightarrow \Pi_D)$ we have $\exists W \in \vars \text{s.t.} (W \not \in \Pa_X^*) \text{ and } ((W \rightarrow X) \in p)$ 
\end{proposition}

That is to say, an intervention on $X$ which severs at least one edge critical to each reachability path between $\mecvar_V$ and $\Pi_D$ through $X$, will delete the corresponding edge between those mechanisms in $\mec \graph$. Similarly, if an intervention creates at least one new reachability path, it will result in the addition of a new inter-mechanism edge.

\textbf{Minimum intervention sets:} Using these observations, we formalise the minimum set of interventions required to break a causal mechanism dependency $\mecvar_V \rightarrow \Pi_D$. Since we are only interested in interventions that do not directly modify the target policy $\Pi_D$, and we recall that reachability paths are calculated on the independent mechanised graph $\mec{}_{\perp}\graph$ which contains no edges of the form $\mecvar_V \rightarrow \Pi_D$, we can restrict our attention to object-level interventions. Then, the minimum intervention set is the minimum hitting set across all reachability paths, of the variables with incoming edges that would break the dependency if removed.

\begin{definition}[Minimum intervention set] \label{def:min_inter}
The minimum set of objects to intervene on in order to break causal mechanism dependency $\mecvar_V \rightarrow \Pi_D$ is $X$ s.t. :
\begin{align*}
  &X \cap S_i \neq \emptyset \text { for all } S_i \text{ where }\\
  &S_i = \set{V \in \vars \mid \exists W \in \vars . (W\rightarrow V) \in \R(\mecvar_V \rightarrow \Pi_D)}
\end{align*}
\end{definition}

This metric measures how \textit{robust} a causal mechanism dependency is to external interventions. The size of this set is the minimum number of object-level interventions required to ensure that, under every parameterization of the game, there is no incentive for a target policy $\Pi_D$ to depend on the mechanism variable $\mecvar_V$.

\section{Causal Mechanism Design}\label{sec:CMD}

Mechanism design aims to modify a game to satisfy a desired social outcome or agent behaviour~\citep{intro_mech_design}. Current approaches establish error bounds on expected outcomes for particular families of games when an intervention is conducted~\citep{rmcp, mech_design_survey}. This section explores how our framework enables a systematic approach to causal mechanism design.

\subsection{Qualitative Specifications}

Qualitative specifications are concerned with properties of the DAG $\graph$. Consider the mechanised graph of Example~\ref{ex:job_market} shown in Figure~\ref{fig:1}. The cyclic structure between nodes $\Pi_{D^1}$ and $\Pi_{D^2}$ means the optimal policy for each agent depends on the other agent's policy.

\begin{figure}[]
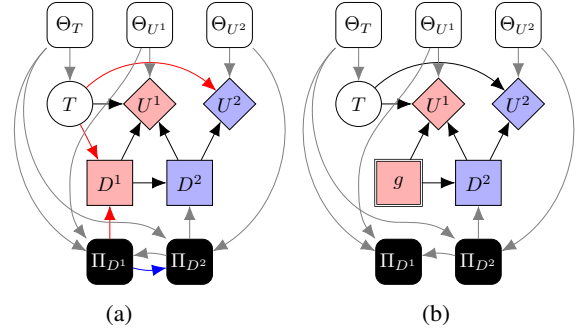

    \centering
    \begin{subfigure}[b]{0.45\linewidth}
        \centering
        \resizebox{0.75\width}{!}{
        \begin{influence-diagram}
        \node (X) [] {$T$};
        \node (U1) [utility, right = of X, player1] {$U^1$};
        \node (U2) [utility, right = of U1, player2] {$U^2$};
        \node (D1) [decision, below right = 1.4cm and 0.7cm of X, player1] {$D^1$};
        \node (D2) [decision, right = of D1, player2] {$D^2$};
        \edge [red] {X} {D1};
        \edge {D1} {D2};
        \path (X) edge[red,->, bend left=45] (U2);
        \edge {X} {U1};
        \edge {D1} {U1};
        \edge {D2} {U1,U2};
        \node (X_mec) [relevancew, above = of X] {$\Theta_T$};
        \node (U2_mec) [relevancew, above = of U2] {$\Theta_{U^2}$};
        \node (D1_mec) [relevanceb, below = of D1] {$\Pi_{D^1}$};
        \node (U1_mec) [relevancew, above = of U1] {$\Theta_{U^1}$};
        \node (D2_mec) [relevanceb, below = of D2] {$\Pi_{D^2}$};
        \edge[gray] {X_mec} {X};
        \edge[red] {D1_mec} {D1};
        \edge[gray] {D2_mec} {D2};
        \edge[gray] {U1_mec} {U1};
        \edge[gray] {U2_mec} {U2};
        \node (space1) [minimum size=0mm, node distance=2mm, below = 0.7cm of D1, draw=none] {};
        \draw (X_mec) edge[gray,in=180,out=-135] (space1.center)
        (space1.center) edge[gray,->,out=0,in=135] (D2_mec);
        \node (space3) [minimum size=0mm, node distance=2mm, left = 0.7cm of D1, draw=none] {};
        \draw (U1_mec) edge[gray,in=90,out=-110] (space3.center)
        (space3.center) edge[gray,->,out=-90,in=135] (D1_mec);
        \path (D1_mec) edge[blue,->, bend right=15] (D2_mec);
        \path (D2_mec) edge[gray,->, bend right=15] (D1_mec);
        \draw (X_mec) edge[gray,->,out=-135,in=155] (D1_mec);
        \draw (U2_mec) edge[gray,->, out=-45, in=25] (D2_mec);
    \end{influence-diagram}
        }
        \caption{}
        \label{fig:1}
    \end{subfigure}
    \begin{subfigure}[b]{0.54\linewidth}
        \centering
        \resizebox{0.75\width}{!}{
        \begin{influence-diagram}
        \node (X) [] {$T$};
        \node (U1) [utility, right = of X, player1] {$U^1$};
        \node (U2) [utility, right = of U1, player2] {$U^2$};
        \node (D1) [decision, intervened, below right = 1.4cm and 0.7cm of X, player1] {$g$};
        \node (D2) [decision, right = of D1, player2] {$D^2$};
        \edge {D1} {D2};
        \path (X) edge[->, bend left=45] (U2);
        \edge {X} {U1};
        \edge {D1} {U1};
        \edge {D2} {U1,U2};
        \node (X_mec) [relevancew, above = of X] {$\Theta_T$};
        \node (U2_mec) [relevancew, above = of U2] {$\Theta_{U^2}$};
        \node (D1_mec) [relevanceb, below = of D1] {$\Pi_{D^1}$};
        \node (U1_mec) [relevancew, above = of U1] {$\Theta_{U^1}$};
        \node (D2_mec) [relevanceb, below = of D2] {$\Pi_{D^2}$};
        \edge[gray] {X_mec} {X};
        \edge[gray] {D2_mec} {D2};
        \edge[gray] {U1_mec} {U1};
        \edge[gray] {U2_mec} {U2};
        \node (space1) [minimum size=0mm, node distance=2mm, below = 0.7cm of D1, draw=none] {};
        \draw (X_mec) edge[gray,in=180,out=-135] (space1.center)
        (space1.center) edge[gray,->,out=0,in=135] (D2_mec);
        \node (space3) [minimum size=0mm, node distance=2mm, left = 0.7cm of D1, draw=none] {};
        \draw (U1_mec) edge[gray,in=90,out=-110] (space3.center)
        (space3.center) edge[gray,->,out=-90,in=135] (D1_mec);
        \path (D2_mec) edge[gray,->, bend right=15] (D1_mec);
        \draw (X_mec) edge[gray,->,out=-135,in=155] (D1_mec);
        \draw (U2_mec) edge[gray,->, out=-45, in=25] (D2_mec);
    \end{influence-diagram}
        }
        \caption{}
            \label{fig:2}
    \end{subfigure}
 \caption{The \textit{Job Market} game with an intervention $\Do (D^1 = g)$ satisfying qualitative specification $\Pi_{D^1} \nrightarrow \Pi_{D^2}$. Mechanism-level dependencies are coloured grey. In (a), the blue edge indicates the $\rbr$-relevance of $\Pi_{D^1}$ to $\Pi_{D^2}$ and red edges indicate an active reachability path. In (b), the intervention $\Do (D^1 = g)$ breaks all the reachability paths which made $\Pi_{D^1}$ relevant to $\Pi_{D^2}$.}
\end{figure}

A specification may require a decision rule to be \emph{independent} of a particular mechanism. For example, we may want the firm's hiring policy to be independent of the worker's policy when deciding to go to university. That is, we wish to break the causal dependency $\Pi_{D^1} \rightarrow \Pi_{D^2}$. When this edge does not exist, it means that the firm's optimal policy does not depend on the worker's policy for \emph{any} parameterisation of the game. There are two ways to satisfy this specification.
\begin{enumerate}
  \item Intervene on the target policy $\Pi_{D^2}$ with $r_{D^2}^{\I}: \dom(\Pa_{\Pi_{D^2}}^*) \rightarrow \dom(\Pi_{D^2})$ such that $\Pi_{D^1} \not \in \Pa_{\Pi_{D^2}}^*$, e.g., the hard intervention $\Do (\Pi_{D^2} = \delta(D^2, \lnot j))$ which forces the firm to reject every candidate.
  \item Perform an object-level intervention to appropriately change the reachability structure of the graph. There are two paths that make $\Pi_{D^1}$ $\rbr$-relevant to $\Pi_{D^2}$. The first is $\Pi_{D^1} \rightarrow D^1 \leftarrow T \rightarrow U^2$ when conditioned on $\set{D^2, D^1}$ since $\Pi_{D^1} {\not\perp}_{\meczero{\graph}}~\utilvars^2 \cap \Desc_{D^2} \mid D^2, \Pa_{D^2}$. The second is $\Pi_{D^1} \rightarrow D^1$ conditioned on $\emptyset$ since $\Pi_{D^1} {\not\perp}_{\meczero{\graph}}~\Pa_{D^2}$. 
\end{enumerate}

Option 1 is somewhat against the ``spirit'' of mechanism design, which seeks to induce certain behaviours or social outcomes without undermining an agent's ability to make their own rational choices. However, the intervention on $\Pi_{D^2}$ changes properties of the target agent's behaviour by directly intervening on their policy.

Option 2 requires both active paths to be blocked. This can be achieved through an intervention on $D^1$ of the form $\jointdistr^{\I}(d^1 \mid \pa_{D^1}^*)$ where $\Pa_{D^1} = \emptyset$. An example would be $\Do (D^1 = g)$, shown in Figure~\ref{fig:2}. The cyclic structure between $\Pi_{D^1}$ and $\Pi_{D^2}$ is broken, and the firm has no incentive to consider the worker's policy.

\paragraph{Hiding and Revealing Information}

Another qualitative specification is to hide or reveal certain information to agents. This can be done by modifying the incoming edges into a decision variable. Suppose we wish to hide the agent's decision of going to university from the firm in Example~\ref{ex:job_market}. Intervention $\Del (D^1 \rightarrow D^2)$ satisfies this but has mechanism level side-effect $\Pi_{D^1} \not \rightarrow \Pi_{D^2}$.

A more general question is: under what circumstances is it possible to hide or reveal information \textit{without} changing the mechanism dependency structure? The mechanism dependency structure is retained if, for any pair of mechanisms with active reachability paths, at least one path is not broken, and if, for any pair of mechanisms with no active reachability paths, no new paths are introduced. We call an intervention that preserves this structure \textit{incentive invariant}.

\begin{definition}[Incentive Invariance] \label{def:hide_reveal}
  An intervention $\I$ is incentive invariant if $\forall~\mecvar_V \in \mecvars, \forall~\Pi_D \in \policyprofs$, we have pre-intervention reachability paths $\rbr(\mecvar_V \rightarrow \Pi_D)$ and post-intervention reachability paths $\R^{\BR*}(\mecvar_V \rightarrow \Pi_D)$ s.t.
  \[
    \lvert \R^{\BR*}(\mecvar_V \rightarrow \Pi_D) \rvert 
    \begin{cases}
      = 0 \text{, if } \lvert \rbr(\mecvar_V \rightarrow \Pi_D) \rvert = 0 \\
      > 0 \text{, if } \lvert \rbr(\mecvar_V \rightarrow \Pi_D) \rvert > 0
    \end{cases}
  \]
\end{definition}

\subsection{Quantitative Specifications}

Quantitative specifications describe bounds on game outcomes. For example, they specify that the expected payoff of an agent is greater than some value, that the probability of a certain event occurring is within some range, or that some social welfare metric is maximised. There are many ways of satisfying these specifications, including the modifications to the object-level and mechanism-level dependencies discussed previously. Here, we focus on interventions that directly modify the chance or utility variables of the game or the corresponding mechanism-level parameter variables.

\textbf{Taxes and Rewards:} One way of inducing certain behaviour is to modify the payoffs for certain outcomes through taxes and rewards. The inter-mechanism edges reveal which utility variables, under some parameterisation of the game, can affect an agent's $\R$-rational choice of policy. For example, consider the Prisoner's Dilemma, where the prisoners are restricted to pure policies. The mechanised graph for this is the same as in Figure~\ref{fig:commitment_before}. Suppose we are a sadistic game designer who wants to maximise the jail time of both prisoners by any means. We can do this in several ways by modifying the usual payoffs of the game (Table \ref{tab:prisoners_dilemma}).

\begin{table}[h!] 
    \centering
    \caption{The payoffs in the Prisoner's Dilemma.}
    \begin{tabular}{cc|c|c|}
    & \multicolumn{1}{c}{} & \multicolumn{2}{c}{Agent 2 (Bob)}\\
    & \multicolumn{1}{c}{} & \multicolumn{1}{c}{Cooperate}  & \multicolumn{1}{c}{Defect} \\\cline{3-4}
    \multirow{2}*{Agent 1 (Alice)}  & Cooperate & (-1, -1) & (-5, 0) \\\cline{3-4}
    & Defect & (0, -5) & (-2, -2)\\\cline{3-4}
    \end{tabular}
    \label{tab:prisoners_dilemma}
\end{table}

One way is to decrease the payoffs of the NE $(D, D)$ (the $\rbr$-rational outcome). Typically, mutual defection leads to a total jail time of 4 years. Changing $(D, D)$ to $(-3, -3)$ yields 6 years total. In fact, we could change the payoff of $(D, D)$ to $(-5 + \varepsilon , -5 + \varepsilon)$ for arbitrary $\varepsilon > 0$ yielding $-10 + 2\varepsilon$ years total while retaining $(D, D)$ as the single pure policy NE. Since this intervention does not affect the best-response of either prisoner, it doesn't matter whether this intervention is implemented as a fully pre-policy, fully post-policy, or interleaved intervention. The prisoners will play the same policies and the same NE will be reached.

Another way is by \textit{taxing} the existing rational outcome. In fact, by introducing a partially visible intervention, we can also \textit{reward} certain behaviours to satisfy the specification. If Alice believes that mutual cooperation will lead to both agents going free, while Bob believes they will suffer 1 year each, then $(C, D)$ becomes a new NE. This can be implemented in one of two ways.

\begin{example}[Partially Visible Rewards] \label{ex:rewards}
We want to influence one prisoner in the Prisoner's Dilemma to cooperate. C and D indicate the pure policies ``cooperate'' and ``defect'' respectively. Let $\prims = \set{\Do (\Theta_{U^1} = \theta_{U^1}^*), \Do (\Theta_{U^2} = \theta_{U^2}^*)}$ be the set of primitive interventions with
\begin{equation*}
  \theta_{U^1}^*(u^1 \mid d^1, d^2) = 
  \begin{cases}
    \delta(u^1, 0) & \hspace{-1em} \text{if } d^1 = C \text{ and } d^2 = C \\
    \theta_{U^1}(u^1 \mid d^1, d^2) & \text{otherwise}
  \end{cases}
\end{equation*}
\noindent
and similar for $\theta_{U^2}^*$. Let $\prims'$ be the inverse. We make $\prims$ visible to only Alice in one of two ways:
\begin{enumerate}
  \item $\prims_0 = \emptyset,~\prims_1 = \prims,~A_0=\set{Bob},~A_1=\set{Alice}$. This changes the payoffs of $(C, C)$ so both prisoners go free, but only Alice is informed of the change (the intervention is hidden from Bob).
  \item $\prims_0=\prims,~\prims_1=\prims',~A_0=\set{Alice},~A_1=\set{Bob}$. This informs Alice that $(C, C)$ will lead to both prisoners going free but reverses this intervention between Alice's and Bob's policy choices, so it deceives Alice into believing an intervention has taken place.
\end{enumerate}
In either case, Alice believes there are two possible NE: $(C, C)$ and $(D, D)$, whereas Bob believes there is only one $(D, D)$. So, if Alice plays uniform distribution over her best responses $C$ and $D$, and Bob plays $\delta(D^2, D)$, the expected total jail time is $\expect_{\nashpolicyprof}[U^1 + U^2] = \frac{1}{2} (0 - 5) + \frac{1}{2} (-2 - 2) = -4.5$
Therefore, adding total reward of 2 to game outcome reduces the expected total payoff by 0.5.
\end{example}

\textbf{Environment Modifications:}
Another way of satisfying a quantitative specification is to modify the chance variables. In Example \ref{ex:job_market}, the worker's temperament can affect both agents' policies. Suppose we want to maximise the probability of the worker getting a job and $\R=\rbr$, i.e., we want the probability of the worker getting the job under any NE of the intervened game to be at least as high as the probability of the worker getting the job under any NE of the original game. Formally, an intervention $\I$ satisfies this specification if
\[
  \min_{\nashpolicyprof \in \rbr(\model_{\I})} \jointdistr^{\nashpolicyprof}(j) \geq \max_{\policyprof \in \rbr(\model)} \jointdistr^{\policyprof}(j)
\]

One way to do this is to change the location of the game to \textit{EffortVille} where everyone is hard-working. This corresponds with a mechanism-level $\Do (\Theta_T = \delta(T, h))$ or object-level $\Do (T = h)$ intervention. In this case, the CG has three pure policy $\rbr$-rational outcomes (NE).\footnote{NEs in causal games can be found using PyCID \cite{pycid}.}
\begin{enumerate}
  \item The worker always chooses $g$. The hiring system always chooses $j$. So $\exputility_{\policyprof}[\utilvars^1] = 5$ and $\exputility_{\policyprof}[\utilvars^2] = 3$
  \item The worker always chooses $g$. The hiring system chooses $j$ if the worker chooses $g$. Otherwise, it chooses $\lnot j$. So $\exputility_{\policyprof}[\utilvars^1] = 5$ and $\exputility_{\policyprof}[\utilvars^2] = 3$
  \item The worker always chooses $\lnot g$. The hiring system chooses $\lnot j$ if the worker chooses $g$. Otherwise, it chooses $j$. So $\exputility_{\policyprof}[\utilvars^1] = 4$ and $\exputility_{\policyprof}[\utilvars^2] = 3$
\end{enumerate}
In all these NEs, the probability of the worker getting a job is 1, so it satisfies the specification. Also, the first two NEs of the intervened game maximise utilitarian and egalitarian social welfare. The identity intervention, which does not change the game, would also have satisfied this specification because all three pure NEs of the original game also result in the worker getting a job with probability 1. However, this is not the case under NEs with stochastic policies. The original game has the following NE: If the worker is hardworking, she chooses $g$ with probability $\frac{1}{2}$. If she is lazy, she always chooses $\lnot g$. If she chooses $g$, the firm always chooses $j$. If she chooses $\lnot g$, the firm chooses $j$ with probability $\frac{4}{5}$ .
This yields a $\frac{9}{10}$ probability of the worker getting a job. On the other hand, the NEs of the intervened game are
\begin{enumerate}
  \item The worker always chooses $g$, The hiring system chooses $j$ if the worker chose $g$, otherwise it chooses $j$ with any probability $q_1 \in [0,1]$.
  \item The worker always chooses $\lnot g$, The hiring system chooses $j$ if the worker chose $\lnot g$, otherwise it chooses $j$ with any probability $q_2 \in [0,\frac{4}{5}]$.
\end{enumerate}
So, the worker gets a job with probability 1 either way. Therefore, the intervention of \textit{EffortVille} satisfies the specification in the stochastic policy case, whereas the identity intervention does not.
In our intervention framework, we can model \textit{EffortVille} with primitive intervention set $\prims_0 = \set{\Do (T=h)}$ and $A_0 = \set{1, 2}$ since we want the intervention to be fully pre-policy, allowing both agents to adapt their policies accordingly. Note, however, that it is typically not possible for game designers to intervene on the chance variables of the game as these are usually used to represent `moves by nature'. For example, a government may be able to intervene on utility variables by taxing or rewarding workers and firms, but it is unlikely that they can affect the underlying temperament of the workers.

\section{Commitment}\label{sec:commit}
Interventions on decision and decision rule variables enable us to reason about \textit{commitment}. In some games, it is possible for the first moving agent, the \textit{leader}, to gain a strategic advantage over others, called \textit{followers}, by \textit{committing} to a policy before the game begins; the leader can sometimes influence the follower's incentives by revealing private information about their policy. The simplest example is a \textit{Stackelberg game} consisting of one leader and follower. 

We use an example from \citep{letchford2010computing}, which shares the same game graph as in Figure \ref{fig:commitment_before} with Agent 1 (2) the leader (follower) and with $\dom(D^1)=\{T,B\}$ and $\dom(D^2)=\{L,R\}$. The utility parameterization is shown in Table~\ref{fig:stack-normal}.
Pre-commitment, Action $T$ strictly dominates $B$ so $(T,L)$ is a unique NE and $\exputility_{\policyprof}[\utilvars^1] = 2$. However, by committing to the pure policy $B$, the leader incentivises the follower to play $R$ and so $\exputility_{\policyprof}[\utilvars^1] = 3$. Note, in this case, the result of commitment is also a Pareto improvement over the original NE (i.e., Stackelberg commitment can also improve social welfare).

Causal games naturally represent commitment with a simple causal intervention on node $\Pi_{D^1}$ to be fixed to the committed policy $\pi_1$ (shown in Figure~\ref{fig:commitment_after}). The payoff received by the leader after commitment can be calculated through backward induction on the graph \citep{hammond2021equilibrium}. 

By representing commitment as a causal intervention, we can prove whether a particular commitment can be beneficial for the leader. In the stochastic policy setting, the follower will still play a pure policy since she has no incentive to randomise after the leader's commitment; she is effectively playing a single-agent decision game. The leader's expected utility after committing to policy $\pi_1 = \frac{1}{2} T + \frac{1}{2} B$ is 3.5 (in Appendix~\ref{app:commitment_after}). This is greater than the expected utility of 2 in the original game's unique NE, which benefits the leader.

A partially visible commitment is represented naturally in our new framework of causal interventions. Specifically, a commitment that occurs in the primitive intervention set $\prims_{j}$ can be revealed to all agents in $A_j, A_{j+1}, \ldots, A_{m}$. For example, if we have $\prims_0 = \emptyset$, $\prims_1 = \set{\Do(\Pi_1 = \delta(D^1, T))}$, $A_0 = \set{1}$, and $A_1 = \set{2}$ then Agent 1 commits to playing `B'' and reveals it to Agent 2. Algorithm~\ref{alg:inter_query} reveals that Agent 2 will play $\delta(D^2, R)$ (i.e. always playing ``R'') and Agent 1 will receive a payoff of $3$. However, if Agent 1's commitment to playing $\delta(D^1, B)$ is kept private from Agent 2, then we have $\prims_0 = \emptyset$, $\prims_1 = \set{\Do(\Pi_1 = \delta(D^1, B))}$, $A_0 = \set{1, 2}$, and $A_1 = \emptyset$. Then, Agent 2 will always play `L', in accordance with the NE of the original system, calculated after $\prims_0$, giving Agent 1 a payoff of $2$. In Appendix~\ref{app:optimal}, we show that we can also use the intervened graph to calculate the \textit{optimal} policy to commit to.

\begin{figure}[]
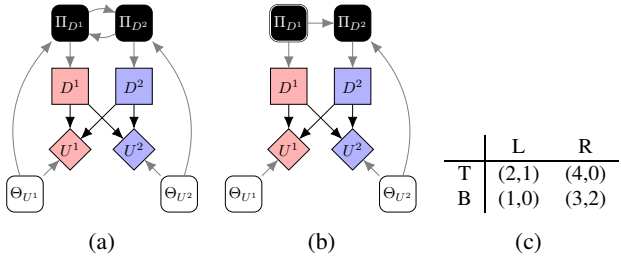

    \centering
    \begin{subfigure}[b]{0.34\linewidth}
        \centering
        \resizebox{0.6\width}{!}{
        \begin{influence-diagram}
            \node (D1) [decision, player1] {$D^1$};
            \node (D2) [decision, player2, right of=D1] {$D^2$};
            \node (U1) [utility, player1, below of=D1] {$U^1$};
            \node (U2) [utility, player2, below of=D2] {$U^2$};
            \node (P1) [relevanceb, above of=D1] {$\Pi_{D^1}$};
            \node (P2) [relevanceb, above of=D2] {$\Pi_{D^2}$};
            \node (M1) [relevancew, below left of=U1] {$\Theta_{U^1}$};
            \node (M2) [relevancew, below right of=U2] {$\Theta_{U^2}$};

            \edge {D1} {U1};
            \edge {D2} {U2};
            \edge {D1} {U2};
            \edge {D2} {U1};
            \edge[gray] {P1} {D1};
            \edge[gray] {P2} {D2};
            \edge[gray] {M1} {U1};
            \edge[gray] {M2} {U2};
            \path (M1) edge [->, gray, bend left] (P1);
            \path (M2) edge [->, gray, bend right] (P2);
            \path (P1) edge [->, gray, bend left] (P2);
            \path (P2) edge [->, gray, bend left] (P1);
            
        \end{influence-diagram}
        }
        \caption{}
                \label{fig:commitment_before}
    \end{subfigure}
    \begin{subfigure}[b]{0.34\linewidth}
        \centering
        \resizebox{0.6\width}{!}{
        \begin{influence-diagram}
            \node (D1) [decision, player1] {$D^1$};
            \node (D2) [decision, player2, right of=D1] {$D^2$};
            \node (U1) [utility, player1, below of=D1] {$U^1$};
            \node (U2) [utility, player2, below of=D2] {$U^2$};
            \node (P1) [intervened,relevanceb, above of=D1] {$\Pi_{D^1}$};
            \node (P2) [relevanceb, above of=D2] {$\Pi_{D^2}$};
            \node (M1) [relevancew, below left of=U1] {$\Theta_{U^1}$};
            \node (M2) [relevancew, below right of=U2] {$\Theta_{U^2}$};

            \edge {D1} {U1};
            \edge {D2} {U2};
            \edge {D1} {U2};
            \edge {D2} {U1};
            \edge[gray] {P1} {D1};
            \edge[gray] {P1} {P2};
            \edge[gray] {P2} {D2};
            \edge[gray] {M1} {U1};
            \edge[gray] {M2} {U2};
            \path (M2) edge [->, gray, bend right] (P2);
        \end{influence-diagram}
        }
        \caption{}
                \label{fig:commitment_after}
    \end{subfigure}
    \begin{subfigure}[b]{0.3\linewidth}
        \centering
        \centering
            \resizebox{0.8\width}{!}{
\begin{tabular}{ c | c c }
      & L & R \\
      \hline
      T & (2,1) & (4,0) \\
      B & (1,0) & (3,2) \\
    \end{tabular}}
        \caption{}
        \label{fig:stack-normal}
    \end{subfigure}
 \caption{The Stackleberg game represented as mechanised causal graphs (a) pre-commitment, and (b) post-commitment, with payoff matrix (c). }
    \label{fig:}
\end{figure}

\section{Conclusion}

This work presents a sound and complete characterisation of arbitrary causal interventions in causal games. It uses this framework to evaluate and systematically modify incentive structures to satisfy qualitative and quantitative specifications, which has important applications for causal mechanism design. Solving interventional queries is computationally expensive, but we prove results and give algorithms, showing how they can be made more tractable. Finally, we focus on pedagogical examples, but demonstrating the method empirically on larger examples is an important direction for future work.

\begin{acknowledgements}
    The authors wish to thank five anonymous reviewers for their helpful comments.  Fox was supported by the EPSRC Centre for Doctoral Training in Autonomous Intelligent Machines and Systems (Reference: EP/S024050/1) and Wooldridge was supported by a UKRI
Turing AI World Leading Researcher Fellowship (Reference: EP/W002949/1).
\end{acknowledgements}

\bibliography{refs}

\newpage

\onecolumn

\title{Characterising Interventions in Causal Games\\(Supplementary Material)}
\maketitle

\appendix

\section{Proofs}
\label{app:proof}
\setcounter{theorem}{0}
\setcounter{proposition}{0}

\begin{theorem}
\label{theorem:prim}
Primitive interventions are a sound and complete formulation of causal interventions. 
\end{theorem}

\begin{proof}
We first prove soundness. This is true using the definitions of the primitive interventions. Let $\jointdistr(\sv_{\I})$ be the induced joint distribution by type 1, 3, and 4 interventions, as per the definitions, and $\jointdistr(\sv_{\I}) = \jointdistr(\sv)$ for type 2 interventions. Also, let $\R^*$ be the induced rationality relations by type 2 interventions and $\R^* = \R$ for type 1, 3, and 4 interventions. Then, the effect of a primitive intervention $\mathrm{P}$ of any type is the function $\set{\jointdistr^{\policyprof}(\sv)}_{\policyprof \in \R} \mapsto \set{\jointdistr^{\policyprof}(\sv_{\I})}_{\policyprof \in \R^*}$ which is a valid causal intervention.

 We now turn to completeness by showing that any intervention $\I$ can be decomposed into a set of equivalent primitive interventions $\prims$. That is to say, the state of the game after applying intervention $\I$ is equivalent to the state of the game after applying interventions $\prims$. Suppose,
\begin{align*}
    \I(\set{\jointdistr^{\policyprof}(\sv)}_{\policyprof \in \R}, \R) &= (\set{\jointdistr^{\policyprof}(\sv_\I)}_{\policyprof \in \R^*}, \R^*)\\
    \text{s.t. } \jointdistr^{\policyprof}(\sv) &= \prod_{V \in \vars} \jointdistr^{\policyprof} (v\mid\pa_{V})\\
    \text{and } \jointdistr^{\policyprof}(\sv_\I) &= \prod_{V_\I \in \vars_\I} \jointdistr^{\policyprof} (v_\I\mid\pa_{V})
\end{align*}
Then the trivial decomposition of $\I$ is $| \sV_\I |$ type 3 interventions which multiply the joint distribution by each of $\jointdistr^{\policyprof} (v_\I\mid\pa_{V})$, followed by $| \sV |$ type 4 interventions that divide the joint distribution by each of $\jointdistr^{\policyprof} (v\mid\pa_{V})$, then $| \R^* |$ type 2 interventions that attach the appropriate rationality relation to each mechanism variable. 

Of course, more concise decompositions are possible if there is overlap between $\R$ and $\R^*$ as well as overlap between $\set{\Pa_V}_{V \in \sV}$ and $\set{\Pa_{V_\I}}_{V_\I \in \sV_\I}$.
\end{proof}

\begin{proposition}[Object-level intervention side effects] \label{prop:side_effects}
An object-level intervention $\jointdistr^{\I}(x \mid \pa_X^*)$ has side effect $\mecvar_V \not \rightarrow \Pi_D$ if, $\forall$ reachability paths $p \in \R(\mecvar_V \rightarrow \Pi_D)$ we have $\exists W \in \vars \text{s.t.} (W \not \in \Pa_X^*) \text{ and } ((W \rightarrow X) \in p)$ 
\end{proposition}

\begin{proof}
An intervention $\jointdistr^{\I}(x \mid \pa_X^*)$ has side effect $\mecvar_V \not \rightarrow \Pi_D$ if in the intervened graph there are no reachability paths from $\mecvar_V$ to $\Pi_D$. This means at least one causal arrow is broken in each such reachability path in the original graph. An object-level intervention on $X$ breaks only the causal arrows $W \rightarrow X$ where $W \in \Pa_X$ but $W \notin \Pa_X^*$.
\end{proof}

\begin{theorem}[Decomposition of Intervention Sets] \label{thm:decomp}
For any set of interventions $\sI$, where $\sI^i \subseteq \sI$ is the subset of interventions visible to agent $i$, there are primitive intervention sets $\prims_0, \ldots, \prims_{m}$, such that
\begin{equation} \label{eq:ordered_prims}
     \forall i~\exists j \in \set{0,\ldots,m} : \sI^i(\model)= (\prims_j \circ \prims_{j-1} \ldots \circ \prims_{0})(\model)
\end{equation}
\end{theorem}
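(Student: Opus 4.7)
The plan is to leverage two facts from Section~\ref{sec:prim}: Theorem~\ref{theorem:prim}, which guarantees that every causal intervention decomposes into a set of primitives, and the remark after it that each primitive type admits an inverse (``unfixing'' reverses types~1 and~2, while types~3 and~4 are each other's inverses). Together these give us the raw material (primitives) and the ability to ``roll back'' any step we take.

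First, I would invoke Theorem~\ref{theorem:prim} to replace each $\I \in \sI$ by its primitive decomposition, so that without loss of generality $\sI$ is itself a collection of primitives and every visible set $\sI^i$ is a subset of these primitives. I would then partition the agents by visibility: enumerate the distinct subsets $V_0, V_1, \ldots, V_m \subseteq \sI$ that arise as some $\sI^i$ in an arbitrary linear order (taking $V_0 = \emptyset$ if any agent has empty visibility), and set $A_j = \set{i : \sI^i = V_j}$.

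The core construction is inductive. Let $\prims_0$ be the primitives in $V_0$ in a fixed arbitrary order. For $j \geq 1$, define $\prims_j$ to be the concatenation of (a) the inverse primitives for each element of $V_{j-1} \setminus V_j$, followed by (b) the primitives in $V_j \setminus V_{j-1}$. An induction on $j$ then shows that after $(\prims_j \circ \prims_{j-1} \circ \ldots \circ \prims_0)(\model)$, each primitive in $V_j$ has been applied exactly once while each primitive in $\sI \setminus V_j$ has either never been applied or has been applied and then reverted. Hence the resulting state equals $V_j(\model) = \sI^i(\model)$ for every $i \in A_j$, supplying the witness $j$ required by the theorem. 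Crucially, the hard mechanism interventions $\Do(\Pi_{D^i} = \hat\pi_{D^i})$ that Algorithm~\ref{alg:inter_query} uses to lock in already-committed policies act on distinct $\Pi_{D^i}$ variables from those the inverse-and-reapply steps target, so they are not disturbed by the revert/reapply schedule.

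The main obstacle will be verifying rigorously that composing a primitive with its inverse truly restores the full game state: the joint distribution factorisation, the mechanised graph $\mec{\graph}$ (including the $\R$-reachability structure that determines which inter-mechanism edges are present), and the rationality relations $\R$. Each of the four primitive types must be handled individually using the definitions in Section~\ref{sec:prim}. A secondary, but purely book-keeping, concern is picking a consistent order within each $\prims_j$ so that non-commutativity between primitives targeting overlapping variables does not break the inductive invariant; this is addressed by recording the original order in which primitives appear in $\sI$ and performing inversions in the mirror order, rather than by any new conceptual device.
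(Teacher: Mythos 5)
Your proposal is correct and follows essentially the same route as the paper's proof: invoke Theorem~\ref{theorem:prim} to replace each intervention by its primitive decomposition, order the agents by their visible sets, and use inverse primitives to step from one agent's view of the game to the next, verified by induction on $j$. The only divergence is that the paper fully reverts agent $j-1$'s primitives back to $\model$ and reapplies all of agent $j$'s (using one singleton $A_j$ per agent, so $m=N$), whereas you revert and apply only the symmetric difference between consecutive visibility sets --- a harmless economy provided the reverted and retained primitives target distinct variables, since the paper's ``unfixing'' inverses restore a CPD to that of the \emph{original} model rather than to the state immediately preceding that primitive.
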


\begin{proof}\label{proof:decomposition}
We show there is a set of primitive intervention sets that satisfy Theorem \ref{thm:decomp} for any set of interventions $\sI^i$ by constructing an example. We use the notation $\prims(\I)$ to denote the primitive decomposition of $\I$ as shown in the proof of Theorem \ref{theorem:prim}. Then, $\prims^i = \cup_{\I \in \sI^i} \prims(\I)$ are the primitive interventions equivalent to each agent's visible interventions.  Consider an arbitrary ordering of $\prims^i = \prim_0^i, \ldots, \prim_k^i$. Let $\invprim_k^i$ denote the inverse of $\prim_k^i$.  The construction is as follows.
\begin{align*}
    \prims_0 &= \prims^0\\
    \prims_j &= \prim_k^j \circ \ldots \circ \prim_0^j \circ \invprim_0^{j-1} \circ \ldots \circ \invprim_k^{j-1} \text{ for } j \in 1, \ldots, N\\
    A_0 &= \emptyset\\
    A_j &= \set{j} \text{ for } j \in 1, \ldots, N
\end{align*}
where $\prims^0$ are the fully pre-policy interventions visible to all agents. In this construction, $A_j$ are singleton sets (except $A_0$) so we have $m = N$. For all $i$, let $j = i$. Then, we make an inductive argument. The base case $\sI^0(\model) = \prims^0(\model) = \prims_0(\model)$ holds by definition of $\prims^0$. Assuming $\sI^{i-1}(\model)= (\prims_{j-1} \circ \prims_{j-2} \ldots \circ \prims_{0})(\model)$, we have
\begin{align*}
    \sI^i(\model) &= \prims^i(\model)\\
    &= (\prims^i \circ \invprims^{i-1} \circ \prims^{i-1})(\model)\\
    &= (\prims_j \circ \prims^{i-1})(\model)\\
    &= (\prims_j \circ \sI^{i-1})(\model)\\
    &= (\prims_j \circ \prims_{j-1} \ldots \circ \prims_{0})(\model)
\end{align*}
So, this assignment of primitive intervention sets satisfies Theorem~\ref{thm:decomp}. Algorithm \ref{alg:inter_query} explicitly shows how this construction is used to calculate interventional queries.
\end{proof}

\section{Commitment}

\subsection{Expected Utility after Commitment}\label{app:commitment_after}
We first show that the expected utility to agent 1 is 3.5 after she commits to policy $\pi_1$, picking between $T$ and $B$ with equal probability.

Let $\I = \set{\Do(\Pi_1 = \pi_1)}$ and fix $\policyprof \in \policyprofs$. We use $\jointdistr^{\I}(X)$ as shorthand for $\jointdistr^{\policyprof}(X_{\I})$, and $\expect_{\I}$ as shorthand for $\expect_{\policyprof}[U_{\I}]$. Then
\begin{align*}
  \expect_{\I} [U^1] &:=  \sum_{u^1 \in \mathit{dom}(U^1)} u^1 \jointdistr^{\I}(u^1) \label{eq:u1-stack} \tag{E1}\\
  &= 2 \cdot \jointdistr^{\I}(D^1 = T \mid \pi_1) \jointdistr^{\I}(D^2 = L \mid \Pi_{D^2})\\ 
  & \quad + 4 \cdot \jointdistr^{\I}(D^1 = T \mid\pi_1) \jointdistr^{\I} (D^2 = R \mid \Pi_{D^2})\\ 
  & \quad + 1 \cdot \jointdistr^{\I}(D^1 = B \mid \pi_1) \jointdistr^{\I} (D^2 = L \mid \Pi_{D^2})\\ 
  & \quad + 3 \cdot \jointdistr^{\I}(D^1 = B \mid \pi_1) \jointdistr^{\I} (D^2 = R \mid \Pi_{D^2})\\
  \jointdistr^{\I}(D^2 = L \mid \Pi_{D^2}) &= \begin{cases}
    1 & \text{if } 0.5 \cdot U^2(2,1) + 0.5 \cdot U^2(1,0) \\ & \quad > 0.5 \cdot U^2(4,0) + 0.5 \cdot U^2(3,2)\\
    0 & \text{otherwise}
  \end{cases}\\
  &= \begin{cases}
    1 & \text{if } 0.5 > 1 \\
    0 & \text{otherwise}
  \end{cases}\\
  &= 0\\
  \text{Similarly, } \jointdistr(D^2 = R \mid \Pi_{D^2}) &= 1\\
  \implies \expect_{\I} [U^1] &= 4 \cdot 0.5 + 3 \cdot 0.5 = 3.5
\end{align*}

\subsection{Optimal Stochastic Behavioural Policy}
\label{app:optimal}

Let agent 1 have policy $\pi_1$ where she plays $T$ with probability $p$ and agent 2 have policy $\pi_2$ where she plays $L$ with probability $q$. Then the optimal policy $\hat{\pi}_1 = \hat{p} T + (1 - \hat{p}) B$ is given by:
\begin{align*}
 q &= \begin{cases}
    1 & \text{if } p > 2(1-p)\\
    0 & \text{otherwise}
  \end{cases}\\
  &= \begin{cases}
    1 & \text{if } p > \frac{2}{3}\\
    0 & \text{otherwise}
  \end{cases}\\
  \implies
 \hat{p} &= \argmax_p \expect_{\I} [U^1] \label{eq:u1-opt} \tag{E2}\\
 &= \argmax_p (2pq + 4p(1-q) + (1-p)q + 3(1-p)(1-q))\\
 &= \argmax_p (p - 2q + 3)\\
 &= \argmax_p \left(p - 2\mathbb{I}\left(p > \frac{2}{3}\right) + 3\right)\\
 &= \frac{2}{3}
\end{align*}
\noindent
where $\I = \set{\Do(\Pi_1 = \hat{\pi}_1)}$.  

So the optimal policy for agent 1 to commit to is $\hat{\pi}_1 = \frac{2}{3}T + \frac{1}{3}B$ with payoff $\mathbb{E}_{\I} [U^1] = \frac{2}{3} + 3 = 3.\dot{6}$. This is greater than the payoff of 2 in the NE of the original game and the payoff of 3.5 in the NE induced after a commitment to $\frac{1}{2} T + \frac{1}{2} B$ as shown in the previous section.

\end{document}